\newtheorem{definition}{Definition}
\newtheorem{proposition}{Proposition}
\newtheorem{example}{Example}
\newtheorem{algorithm}{Algorithm}
\newtheorem{lemma}{Lemma}
\newtheorem{remark}{Remark}
\newcommand{\rou}[1]{\textcolor{red}{#1}}
\newcommand{\rrbr}{]\!]}
\newcommand{\llbr}{[\![}
\begin{document}

\title{Generalized distance to a simplex and a new geometrical method for portfolio optimization}

\author{Fr\'{e}d\'{e}ric BUTIN\footnote{Universit\'e de Lyon, Universit\'e Lyon 1, CNRS, UMR5208, Institut Camille Jordan, 43 blvd du 11 novembre 1918, F-69622 Villeurbanne-Cedex, France, butin@math.univ-lyon1.fr}}

\maketitle

\begin{abstract}
\noindent Risk aversion plays a significant and central role in investors' decisions in the process of developing a portfolio. In this framework of portfolio optimization we determine the portfolio that possesses the minimal risk by using a new geometrical method. For this purpose, we elaborate an algorithm that enables us to compute any generalized Euclidean distance to a standard simplex. With this new approach, we are able to treat the case of portfolio optimization without short-selling in its entirety, and we also recover in geometrical terms the well-known results on portfolio optimization with allowed short-selling.\\
Then, we apply our results in order to determine which convex combination of the CAC 40 stocks possesses the lowest risk: not only we get a very low risk compared to the index, but we also get a return rate that is almost three times better than the one of the index.\\

\noindent\textbf{Keywords:} Portfolio optimization without short-selling, generalized distance to a standard simplex, geometrical approach of portfolio optimization, geometrical algorithm.\\

\noindent\textbf{JEL Classification:} G11, C61, C63.
\end{abstract}

\section{Introduction and aims of the article}\label{section1}

\subsection{\textbf{Framework}}

\noindent The paper~\cite{M52} published by Harry Markowitz in 1952 completely changed the methods of portfolio management and gave birth to the so-called ``Modern Portfolio Theory'', thanks to which its author earned the Nobel Prize in Economics in 1990. Since his works and the paper~\cite{S63} of Sharpe, this theme centralizes a lot of interest and many developments have been written in this domain. Let us recall some recent and important works to which our article is linked.\\

\noindent In~\cite{DJDY08}, J\'{o}n Dan\'{\i}elsson, Bj{\o}rn N. Jorgensen, Casper G. de Vries and Xiaoguang Yang study the portfolio allocation under the probabilistic VaR constraint and obtain remarkable topological results: the set of feasible portfolios is not always connected nor convex, and the number of local optima increases in an exponential way with
the number of states. They propose a solution to reduce computational complexity due to this exponential increase.\\

\noindent In~\cite{FS12}, Claudio Fontana and Martin Schweizer give a simple approach to mean-variance portfolio problems: they change the problems' parametrisation from trading strategies to final positions. In this way they are able to solve many quadratic optimisation problems by using orthogonality techniques in Hilbert spaces and providing explicit formulas.\\

\noindent In their important article~\cite{BCGD18}, Hanene Ben Salah, Mohamed Chaouch, Ali Gannoun and Christian De Peretti (see also the thesis~\cite{B15}) define a new portfolio optimization model in which the risks are measured thanks to conditional variance or semivariance. They use returns prediction obtained by nonparametric
univariate methods to make a dynamical portfolio selection and get better performance.\\

\noindent In~\cite{PR19}, Sarah Perrin and Thierry Roncalli show how four algorithms of optimization (the coordinate descent, the alternating direction method of multipliers, the proximal gradient method and the Dykstra's algorithm) can be used to solve problems of portfolio allocation.\\

\noindent In \cite{BIPS20}, Taras Bodnar, Dmytro Ivasiuk, Nestor Parolya and Wolfgang Schmid make an interesting work about the portfolio choice problem for power and logarithmic utilities: they compute the portfolio weights for these utility functions assuming that the portfolio returns follow an approximate log-normal distribution, as suggested in \cite{B00}. It is also noticeable that their optimal portfolios belong to the set of mean-variance feasible portfolios.

\subsection{\textbf{Aims and organization of the paper}}

\noindent The three aims of this article are the following:\\
--- give a new geometrical algorithm (Algorithm~\ref{alg}) to compute any generalized distance to a simplex,\\
--- determine, by making use of this algorithm, a portfolio with minimal variance,\\
--- apply this technique to the CAC 40 stocks, and get a portfolio with return rate that is almost three times better than the one of the index.\\

\noindent After having briefly explained the notations in \textbf{Section~\ref{section1}}, we expose in \textbf{Section~\ref{section2}} the portfolio optimization problem and prove by compactness and convexity arguments that it possesses a unique solution.\\
Then, in \textbf{Section~\ref{section3}}, we solve the problem in the case where short-selling is allowed. For this, we recall the classical method, and we give our very simple geometrical method.\\
\textbf{Section~\ref{section4}} is the heart of the article: in this section, we solve the portfolio optimization problem in the case where short-selling is not allowed. For this purpose, we give a new geometrical algorithm to compute the distance from a point to a standard simplex, which can be used for every Euclidean distance.\\
We can eventually apply this algorithm to the example of the CAC 40 stocks and determine the portfolio with the lowest risk. This portfolio also has the property of being almost three times more efficient than the underlying index. This is done in \textbf{Section~\ref{section5}}.

\subsection{\textbf{Notations}}

\noindent We consider $n$ stocks $S_1,\,S_2,\dots,\,S_n$ and denote by $X_1,\,X_2,\dots,\,X_n$ the random variables that represent their return rate (for example, daily, monthly or yearly). For every $i\in\llbr1,\,n\rrbr$, we set $m_i=E(X_i)$ (mean of $X_i$) and $V_i=V(X_i)$ (variance of $X_i$). We set $$C=(Cov(X_i,\,X_j))_{(i,\,j)\in\llbr1,\,n\rrbr^2}=\begin{bmatrix}
Cov(X_1,\,X_1) & \cdots & Cov(X_1,\,X_n)\\
\vdots &  & \vdots\\
Cov(X_n,\,X_1) & \cdots & Cov(X_n,\,X_n)
\end{bmatrix},$$
$$X=\begin{bmatrix}
X_1\\
\vdots\\
X_n
\end{bmatrix},\ \ \textrm{and}\ \ m=\begin{bmatrix}
m_1\\
\vdots\\
m_n
\end{bmatrix}.$$
Matrix $C$ is the \emph{covariance matrix} of $X_1,\,X_2,\dots,\,X_n$, and $X$ is a random vector.\\

\begin{definition}
We call \emph{portfolio} (with allowed short-selling) every linear combination $\displaystyle{P_\mathbf{x}=\sum_{i=1}^nx_iS_i}$, where $\mathbf{x}=(x_1,\dots,\,x_n)\in\mathbb{R}^n$ and $\displaystyle{\sum_{i=1}^nx_i=1}$.\\
The \emph{return rate} of the portfolio is the linear combination $\displaystyle{R_\mathbf{x}=\sum_{i=1}^nx_iX_i}$.
\end{definition}

\noindent If we don't allow short-selling, then every $x_i$ must be nonnegative, and in that case the linear combination is a \emph{convex combination}.

\section{Minimisation of the variance of a portfolio}\label{section2}

\subsection{\textbf{The standard simplex of dimension $n-1$}}

\noindent Let us recall some classical results that we will use in the following.

\begin{proposition}\label{esppvar}
We have $E(R_\mathbf{x})=\,^t\mathbf{x}m$ and $V(R_\mathbf{x})=\,^t\mathbf{x}C\mathbf{x}$.
\end{proposition}

\begin{proof}
We immediately have $\displaystyle{E(R_\mathbf{x})=\sum_{i=1}^nx_iE(X_i)=\,^t\mathbf{x}m}$.\\
Moreover,
$$\begin{array}{lll}
V(R_\mathbf{x}) & = & \displaystyle{E\left(\left(\sum_{i=1}^nx_iX_i-E\left(\sum_{i=1}^nx_iE(X_i)\right)\right)^2\right)}\\
 & = & \displaystyle{E\left(\left(\sum_{i=1}^nx_i(X_i-E(X_i))\right)^2\right),}
\end{array}$$
hence
$$\begin{array}{lll}
V(R_\mathbf{x}) & = & \displaystyle{E\left(\sum_{i=1}^n\sum_{j=1}^nx_ix_j(X_i-E(X_i))(X_j-E(X_j))\right)}\\
 & = & \displaystyle{\sum_{i=1}^n\sum_{j=1}^nx_ix_jCov(X_i,\,X_j).}
\end{array}$$
\end{proof}

\noindent The following proposition is immediate.

\begin{proposition}\label{matrC}
The matrix $C$ has the two following properties.
\begin{description}
  \item[(i)] It is symmetric positive,
  \item[(ii)] It is symmetric definite positive if and only if $X_1,\dots,\,X_n$ are almost surely affinely independent.
\end{description}
\end{proposition}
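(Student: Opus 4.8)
The plan is to prove the two properties directly from the defining formula $C = (Cov(X_i, X_j))_{i,j}$ together with the quadratic-form identity $V(R_\mathbf{x}) = \,^t\mathbf{x}C\mathbf{x}$ established in Proposition~\ref{esppvar}.

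First I would verify symmetry, which is immediate: since covariance is a symmetric bilinear form, $Cov(X_i, X_j) = Cov(X_j, X_i)$ for all $(i,j)$, so $C_{ij} = C_{ji}$ and hence $\,^tC = C$. For positivity in part (i), I would not compute anything directly on the entries but instead invoke the probabilistic meaning of the quadratic form. For any $\mathbf{x} \in \mathbb{R}^n$, Proposition~\ref{esppvar} gives $\,^t\mathbf{x}C\mathbf{x} = V(R_\mathbf{x})$, and a variance is always nonnegative as the expectation of a square, so $\,^t\mathbf{x}C\mathbf{x} = V\left(\sum_{i=1}^n x_i X_i\right) \geq 0$. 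This establishes that $C$ is symmetric positive semidefinite. (Strictly, the formula in Proposition~\ref{esppvar} is stated for $\mathbf{x}$ with $\sum x_i = 1$, but the computation of $V\left(\sum x_i X_i\right) = \,^t\mathbf{x}C\mathbf{x}$ is valid for arbitrary $\mathbf{x}$, so I would note this mild extension.)

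For part (ii), the task is to characterize when $C$ is definite, i.e.\ when $\,^t\mathbf{x}C\mathbf{x} = 0$ forces $\mathbf{x} = 0$. The key observation is that $\,^t\mathbf{x}C\mathbf{x} = V\left(\sum_{i=1}^n x_i X_i\right)$ vanishes exactly when the random variable $\sum_{i=1}^n x_i X_i$ is almost surely constant, say equal to its mean $c = \,^t\mathbf{x}m$. Thus the kernel-type condition $\,^t\mathbf{x}C\mathbf{x} = 0$ with $\mathbf{x} \neq 0$ is equivalent to the existence of a nontrivial affine relation $\sum_{i=1}^n x_i X_i = c$ almost surely, which is precisely the negation of affine independence of $X_1, \dots, X_n$. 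I would prove the equivalence by the two contrapositives: if the $X_i$ are affinely dependent there is a nonzero $\mathbf{x}$ with $\sum x_i X_i$ a.s.\ constant, giving $\,^t\mathbf{x}C\mathbf{x} = 0$ and so $C$ not definite; conversely if $C$ is not definite there is a nonzero isotropic vector, which yields such an affine relation.

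The main obstacle here is not any hard calculation but rather pinning down the correct notion of ``almost surely affinely independent'' and using it cleanly. The subtlety is that $V\left(\sum x_i X_i\right) = 0$ means the variable is a.s.\ constant, which corresponds to an \emph{affine} relation $\sum x_i X_i = c$ (constant allowed on the right-hand side) rather than a purely linear one $\sum x_i X_i = 0$; getting this distinction right is what makes the statement read ``affinely independent'' rather than ``linearly independent.'' I would therefore be careful to state that affine independence of the $X_i$ means no nontrivial relation $\sum_{i=1}^n x_i X_i = c$ holds almost surely for constants not all zero in the $x_i$, and then the equivalence with definiteness of $C$ follows transparently from the identity $\,^t\mathbf{x}C\mathbf{x} = V\left(\sum_{i=1}^n x_i X_i\right)$ and the characterization of zero variance.
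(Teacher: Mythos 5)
Your proof is correct. The paper gives no proof at all for this proposition --- it simply declares it immediate --- and your argument (symmetry of covariance; $\,^t\mathbf{x}C\mathbf{x}=V\left(\sum_{i=1}^n x_iX_i\right)\geq 0$ for \emph{arbitrary} $\mathbf{x}$, with the correct remark that the computation in Proposition~\ref{esppvar} never uses $\sum_i x_i=1$; and the characterization of $V(Y)=0$ as $Y$ being a.s.\ constant) is exactly the standard reasoning the paper leaves implicit, with your distinction between the affine relation $\sum_i x_iX_i=c$ a.s.\ and a purely linear one being precisely what justifies the word ``affinely'' in the statement.
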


\noindent In all the following, we assume that $C$ is symmetric \emph{definite} positive: this is always true in practise.

\noindent Let us denote by $H$ the affine hyperplane of $\mathbb{R}^n$ with equation $\displaystyle{\sum_{j=1}^nx_j=1}$, and by $K$ the standard $(n-1)-$simplex (also called standard simplex of dimension $n-1$), i.e.
$$K:=\left\{\mathbf{x}=(x_1,\dots,\,x_n)\in[0,\,1]^n\ /\ \sum_{j=1}^nx_j=1\right\}.$$
The simplex $K$ is a Haussdorff compact subset of $\mathbb{R}^n$ that is contained in the hyperplane $H$.\\
For example, if $n=2$, $H$ is the line of equation $x+y=1$ in the usual plane, and the $1-$standard simplex $K$ is the segment $[(0,\,1),\,(1,\,0)]$ of this line.

\subsection{\textbf{Existence and uniqueness of the solution}}

\noindent Minimizing the variance of the portfolio is equivalent to finding the minimum on $K$ of the quadratic form
$$f:\mathbf{x}\mapsto V(R_\mathbf{x})=\,^t\mathbf{x} C\mathbf{x}.$$
Let us consider the scalar product
$$(\mathbf{x},\,\mathbf{y})\mapsto\langle \mathbf{x},\,\mathbf{y}\rangle:=\,^t\mathbf{x} C\mathbf{}y$$
and the Euclidean norm
$$\mathbf{x}\mapsto\|\mathbf{x}\|:=\sqrt{\,^t\mathbf{x} C\mathbf{x}}.$$
The aim is to determine the point of $K$ that realizes the minimal distance to $K$ from the origin point in the sense of~$\|\cdot\|$.\\
As $K$ is a Haussdorff compact subset, and as $f$ is continuous, we know that this minimum does exist. We will prove that it is also unique. For this, the convexity plays a central role (see for example \cite{R17}).

\begin{proposition}\label{fconv}
The map $f$ is strictly convex.
\end{proposition}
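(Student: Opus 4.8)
The plan is to verify directly the defining inequality of strict convexity: for any two \emph{distinct} points $\mathbf{x},\mathbf{y}$ and any $t\in(0,1)$, I want to establish that $f(t\mathbf{x}+(1-t)\mathbf{y}) < t f(\mathbf{x}) + (1-t)f(\mathbf{y})$. The natural route is to compute the gap $\Delta := t f(\mathbf{x}) + (1-t)f(\mathbf{y}) - f(t\mathbf{x}+(1-t)\mathbf{y})$ explicitly and show it is strictly positive.

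First I would expand $f(t\mathbf{x}+(1-t)\mathbf{y}) = {}^t(t\mathbf{x}+(1-t)\mathbf{y})\,C\,(t\mathbf{x}+(1-t)\mathbf{y})$ by bilinearity; here the symmetry of $C$ (Proposition~\ref{matrC}, item (i)) is essential, since it lets me merge the two cross terms into $2t(1-t)\,{}^t\mathbf{x}C\mathbf{y}$. Substituting into $\Delta$ and collecting the coefficients of ${}^t\mathbf{x}C\mathbf{x}$ and ${}^t\mathbf{y}C\mathbf{y}$, I expect the factors $t-t^2 = t(1-t)$ and $(1-t)-(1-t)^2 = t(1-t)$ to appear, so that $\Delta = t(1-t)\left({}^t\mathbf{x}C\mathbf{x} - 2\,{}^t\mathbf{x}C\mathbf{y} + {}^t\mathbf{y}C\mathbf{y}\right)$.

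The crux is then to recognize the bracket as a perfect square: using symmetry once more, ${}^t\mathbf{x}C\mathbf{x} - 2\,{}^t\mathbf{x}C\mathbf{y} + {}^t\mathbf{y}C\mathbf{y} = {}^t(\mathbf{x}-\mathbf{y})C(\mathbf{x}-\mathbf{y}) = \|\mathbf{x}-\mathbf{y}\|^2$, which gives $\Delta = t(1-t)\|\mathbf{x}-\mathbf{y}\|^2$. To conclude, $t(1-t)>0$ since $t\in(0,1)$, and $\|\mathbf{x}-\mathbf{y}\|^2>0$ because $C$ is definite positive (this is precisely where the standing definiteness assumption is used) and $\mathbf{x}\neq\mathbf{y}$. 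Hence $\Delta>0$, which is exactly strict convexity.

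There is no genuine obstacle here: the computation is routine, and the only point demanding care is the repeated use of the symmetry of $C$, both to combine the cross terms and to factor the bracket as ${}^t(\mathbf{x}-\mathbf{y})C(\mathbf{x}-\mathbf{y})$. Alternatively, one could invoke the $\mathcal{C}^2$ criterion: $f$ is a polynomial map, its Hessian is the constant matrix $2C$, which is positive definite by assumption, and a map whose Hessian is everywhere positive definite is strictly convex. I nonetheless favour the direct computation above, as it is elementary, self-contained, and makes transparent the role of each hypothesis.
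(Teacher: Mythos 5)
Your proof is correct and is essentially the same as the paper's: both compute the convexity gap directly and factor it as $t(1-t)\,{}^t(\mathbf{x}-\mathbf{y})C(\mathbf{x}-\mathbf{y})$, concluding by positive definiteness of $C$ (the paper merely leaves the cross terms as ${}^t\mathbf{x}C\mathbf{y}+{}^t\mathbf{y}C\mathbf{x}$ where you merge them via symmetry, a cosmetic difference). No gap; the Hessian criterion you mention is a valid alternative but not needed.
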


\begin{proof}
For every $x,\,y\in\mathbb{R}^n$ and for every $\lambda\in[0,\,1]$, we have
$$f(\lambda x+(1-\lambda)y)=\lambda^2\,^txCx+(1-\lambda)^2\,^tyCy+\lambda(\lambda-1)\left(\,^txCy+\,^tyCx\right),$$
hence
$$\lambda f(x)+(1-\lambda)f(y)-f(\lambda x+(1-\lambda)y)=\lambda(1-\lambda)\,^t(x-y)C(x-y),$$
which is nonnegative, as $C$ is positive.\\
Moreover, if $x\neq y$, this quantity is positive as $C$ is definite positive.
\end{proof}


\begin{proposition}\label{minconv}
Let $K$ be a convex domain and $f:K\rightarrow\mathbb{R}$ a convex map. Then,
\begin{description}
  \item[(i)] every local minimum of $f$ is global,
  \item[(ii)] if $f$ is strictly convex, then $f$ possesses at most one minimum.
\end{description}
\end{proposition}

\begin{proof}
(i) Let $x_0$ be a local minimum of $f$ in $K$. Let us assume by contradiction that it is not global: there exists $y\in K$ such that $f(y)<f(x_0)$. For every $t\in]0,\,1[$, let us set $y_t=ty+(1-t)x_0$. Then $y_t$ belongs to $K$, and for $t$ small enough, $\|y_t-x_0\|=t\|y-x_0\|$ is close enough to $0$, i.e. $y_t$ is close enough to $x_0$. Thus $f(x_0)\leq f(y_t)$. As $f$ is convex, we have $f(y_t)\leq tf(y)+(1-t)f(x_0)$, hence $f(x_0)\leq tf(y)+(1-t)f(x_0)$, i.e. $f(x_0)\leq f(y)$, which is a contradiction.\\
(ii) Let us assume by contradiction that $f$ possesses a least two distinct minima $x_1$ and $x_2$ with $f(x_1)=f(x_2)$. Then, as $f$ is strictly convex, $f\left(\frac{x_1+x_2}{2}\right)<\frac{1}{2}f(x_1)+\frac{1}{2}f(x_2)=f(x_1)$, which is a contradiction to the fact that $x_1$ is a minimum.
\end{proof}

\noindent As a consequence of Proposition~\ref{minconv}, the quadratic form $f$ possesses exactly one minimum on $f$, and his minimum is global.

\section{Minimisation of $f$ on $H$}\label{section3}

\noindent In this section, we give two methods to compute the portfolio that possesses the lowest risk: the classical one, and our geometrical approach.\\

\noindent Let us denote by $\mathcal{E}=(e_1,\dots,\,e_n)$ the canonical basis of $\mathbb{R}^n$. A basis of the vector hyperplane that directs $H$ is $\mathcal{B}=(e_1-e_2,\,e_1-e_3,\dots,\,e_1-e_n)$. Moreover, for every $\displaystyle{\mathbf{x}=\sum_{j=1}^nx_je_j\in \mathbb{R}^n}$, $\mathbf{x}$ belongs to $H$ if and only if $\displaystyle{\sum_{j=1}^nx_j=1}$.\\

\noindent Let is denote by $u$ the vector $u=\displaystyle{\sum_{j=1}^ne_j}$, which is orthogonal to the hyperplane~$H$, and let us set
$$h:\mathbf{x}=(x_1,\dots,\,x_n)\mapsto\,^t\mathbf{x}u-1=\sum_{j=1}^nx_j-1.$$

\subsection{\textbf{Minimisation of $f$ on $H$ by the classical method}}

\noindent Here we briefly recall the classical method to compute the portfolio that possesses the lowest risk. Several sources, such as \cite{N09}, \cite{M10} and \cite{PP14}, provide a clear presentation of these well-known tools.

\begin{proposition}\label{methclass}
The unique solution $\mathbf{x}_0$ that minimises the map $f$ on $H$ is the vector $\displaystyle{\mathbf{x}_0=\frac{C^{-1}u}{\,^tuC^{-1}u}}$.
\end{proposition}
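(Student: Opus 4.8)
The plan is to treat this as a single equality-constrained minimisation and solve it with a Lagrange multiplier, then invoke the convexity already established to guarantee that the critical point found is in fact the unique global minimiser. I would first observe that $f$ is a smooth ($C^\infty$) function on $\mathbb{R}^n$ and that the constraint defining $H$ is $h(\mathbf{x})={}^t\mathbf{x}u-1=0$, an affine equation whose gradient $\nabla h=u$ never vanishes. Hence the Lagrange multiplier theorem applies: at any point of $H$ where $f$ attains a local extremum there must exist a scalar $\lambda\in\mathbb{R}$ with $\nabla f(\mathbf{x})=\lambda\,\nabla h(\mathbf{x})$.

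Next I would carry out the computation. Since $C$ is symmetric, the gradient of the quadratic form is $\nabla f(\mathbf{x})=2C\mathbf{x}$, so the Lagrange condition reads $2C\mathbf{x}=\lambda u$. Because $C$ is definite positive it is invertible, and multiplying on the left by $C^{-1}$ gives $\mathbf{x}=\tfrac{\lambda}{2}\,C^{-1}u$. Imposing the constraint ${}^t\mathbf{x}u=1$ then yields $\tfrac{\lambda}{2}\,{}^tu\,C^{-1}u=1$, whence $\tfrac{\lambda}{2}=\dfrac{1}{{}^tu\,C^{-1}u}$ and therefore $\mathbf{x}_0=\dfrac{C^{-1}u}{{}^tu\,C^{-1}u}$. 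For this last step to make sense I would check that the denominator is nonzero: since $C$ is definite positive, so is $C^{-1}$, and as $u\neq 0$ we get ${}^tu\,C^{-1}u>0$, which moreover shows $\lambda>0$.

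Finally I would settle existence, uniqueness, and the global nature of the minimum. The affine hyperplane $H$ is convex, and by Proposition~\ref{fconv} the map $f$ is strictly convex, so by Proposition~\ref{minconv}(ii) it admits at most one minimiser on $H$; strict convexity also makes the Lagrange condition sufficient, so the single critical point computed above is that minimiser. Existence follows from coercivity: as $C$ is definite positive, $f(\mathbf{x})\to+\infty$ when $\|\mathbf{x}\|\to+\infty$, so a minimum of the continuous $f$ on the closed set $H$ is attained. I expect the main delicate point to be precisely this interplay between necessity and sufficiency of the critical-point equation on a \emph{non-compact} constraint set $H$: the Lagrange condition alone only gives a necessary condition, and it is the strict convexity of $f$ together with the convexity of $H$ that upgrades it to a genuine global minimum. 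An equivalent and more geometric reading, which foreshadows the method of the next sections, is that minimising $f(\mathbf{x})=\langle\mathbf{x},\mathbf{x}\rangle$ over $H$ amounts to finding the orthogonal projection of the origin onto $H$ for the inner product $\langle\cdot,\cdot\rangle$; the optimality condition $C\mathbf{x}_0\parallel u$ simply expresses that $\mathbf{x}_0$ is $\langle\cdot,\cdot\rangle$-orthogonal to the direction $u^\perp$ of $H$, and leads to the same formula.
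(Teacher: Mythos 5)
Your proof follows exactly the same route as the paper: Lagrange's multiplier theorem applied to $h(\mathbf{x})={}^t\mathbf{x}u-1$, the gradient computation $\nabla f(\mathbf{x})=2C\mathbf{x}$, solving $2C\mathbf{x}=\lambda u$ via invertibility of $C$, and fixing $\lambda$ with the constraint. The additional checks you supply --- that ${}^tu\,C^{-1}u>0$, that a minimiser exists on the non-compact set $H$ by coercivity, and that strict convexity makes the critical point the unique global minimiser --- are correct and in fact tighten the paper's own proof, which invokes the Lagrange condition (a necessary condition only) without explicitly addressing existence on $H$.
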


\begin{proof}
According to Lagrange's multipliers theorem, there exists $\lambda\in\mathbb{R}$ such that $\nabla f(x_0)=\lambda\nabla h(x_0)$. Here we have $\nabla f(x)=2Cx$ and $\nabla h(x)=u$, thus $2Cx_0=\lambda u$, i.e. $x_0=\frac{\lambda}{2}C^{-1}u$. As $\,^tx_0u=1$, we get
$$1=\,^tx_0u=\,^tux_0=\frac{\lambda}{2}\,^tuC^{-1}u,$$
hence $\displaystyle{\lambda=\frac{2}{\,^tuC^{-1}u}}$ and $\displaystyle{x_0=\frac{C^{-1}u}{\,^tuC^{-1}u}}$.
\end{proof}

\begin{example}\label{methclass2}
For $n=2$, the formula is very simple: by setting
$$\Delta=V(X_1)V(X_2)-Cov(X_1,\,X_2)^2,$$
we have
$$C=\begin{bmatrix}
V(X_1) & Cov(X_1,\,X_2)\\
Cov(X_1,\,X_2) & V(X_2)
\end{bmatrix}$$
and
$$C^{-1}=\frac{1}{\Delta}\begin{bmatrix}
V(X_2) & -Cov(X_1,\,X_2)\\
-Cov(X_1,\,X_2) & V(X_1)
\end{bmatrix},$$
hence
$$\displaystyle{x_0=\frac{V(X_2)-Cov(X_1,\,X_2)}{V(X_2)-2Cov(X_1,\,X_2)+V(X_1)}e_1+\frac{V(X_1)-Cov(X_1,\,X_2)}{V(X_2)-2Cov(X_1,\,X_2)+V(X_1)}e_2.}$$
\end{example}

\subsection{\textbf{Minimisation of $f$ on $H$ by the geometrical approach}}

\noindent Here we recover the classical results on the portfolio with minimal variance with allowed short-selling by making use of an Euclidean interpretation.\\
This portfolio is $P_{\mathbf{x}_0}$, where $\mathbf{x}_0$ is the orthogonal projection onto $H$ of the origin point.\\

\noindent Let us define the $(n,\,n)-$matrix
$$A=\begin{bmatrix}
c_{1,1}-c_{1,2} & c_{1,2}-c_{2,2} & \cdots & c_{1,n}-c_{n,2}\\
c_{1,1}-c_{1,3} & c_{1,2}-c_{2,3} & \cdots & c_{1,n}-c_{n,3}\\
\vdots & \vdots & & \vdots\\
c_{1,1}-c_{1,n} & c_{1,2}-c_{2,n} & \cdots & c_{1,n}-c_{n,n}\\
1 & 1 & \cdots & 1
\end{bmatrix}.$$

\begin{proposition}\label{propmatrices}
The unique solution $\mathbf{x}_0$ that minimises $f$ on $H$ is the vector whose coordinates in $\mathcal{E}$ are given by the last column of the inverse of matrix~$A$.
\end{proposition}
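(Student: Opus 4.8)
The plan is to exploit the geometric reading announced just before the statement. Since $f(\mathbf{x})={}^t\mathbf{x}C\mathbf{x}=\langle\mathbf{x},\mathbf{x}\rangle=\|\mathbf{x}\|^2$, minimising $f$ on $H$ is exactly finding the point of the affine hyperplane $H$ closest to the origin for the Euclidean structure $\langle\cdot,\cdot\rangle$. Because $\langle\cdot,\cdot\rangle$ is a genuine scalar product (this is where the hypothesis that $C$ is definite positive enters), that closest point is the orthogonal projection $\mathbf{x}_0$ of the origin onto $H$, and the projection theorem characterises it by the two \emph{necessary and sufficient} conditions $\mathbf{x}_0\in H$ and $\mathbf{x}_0\perp H_0$, where $H_0=\mathrm{span}(\mathcal{B})=\mathrm{span}(e_1-e_2,\dots,e_1-e_n)$ is the direction of $H$. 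Equivalently, one may read these conditions off the Lagrange relation $2C\mathbf{x}_0=\lambda u$ of Proposition~\ref{methclass}, since $\langle\mathbf{x}_0,e_1-e_k\rangle={}^t(C\mathbf{x}_0)(e_1-e_k)=\tfrac{\lambda}{2}\,{}^tu(e_1-e_k)=0$.

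Next I would turn these $n$ scalar conditions into the single matrix equation $A\mathbf{x}_0=e_n$. The membership $\sum_j (x_0)_j=1$ is precisely the last row of $A$ (all ones) applied to $\mathbf{x}_0$, producing the last entry $1$. For each $k\in\llbr2,\,n\rrbr$, the orthogonality condition expands as
$$\langle\mathbf{x}_0,e_1-e_k\rangle={}^t\mathbf{x}_0\,C(e_1-e_k)=\sum_{i=1}^n (x_0)_i\,(c_{i,1}-c_{i,k})=0.$$
Using the symmetry $c_{i,1}=c_{1,i}$ and $c_{i,k}=c_{k,i}$ --- this is exactly where $C$ symmetric is needed --- the left-hand side coincides with the $(k-1)$-th row of $A$ applied to $\mathbf{x}_0$. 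These conditions thus furnish the first $n-1$ entries of $A\mathbf{x}_0$, all equal to $0$, whence $A\mathbf{x}_0={}^t(0,\dots,0,1)=e_n$.

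Finally, once $A$ is known to be invertible, $\mathbf{x}_0=A^{-1}e_n$ is by definition the last column of $A^{-1}$, which is the claim. The step I expect to be the real obstacle is precisely justifying the invertibility of $A$, and I would do it directly by showing $\ker A=\{0\}$. If $A\mathbf{v}=0$, the last row gives $\sum_j v_j=0$, i.e. $\mathbf{v}\in H_0$, while the first $n-1$ rows give $\langle\mathbf{v},e_1-e_k\rangle=0$ for every $k$, i.e. $\mathbf{v}\perp H_0$ since the $e_1-e_k$ span $H_0$. Hence $\mathbf{v}\in H_0\cap H_0^{\perp}$, so $\langle\mathbf{v},\mathbf{v}\rangle=0$ and therefore $\mathbf{v}=0$ by definiteness of $\langle\cdot,\cdot\rangle$. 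This proves $A$ invertible and yields the unique solution $\mathbf{x}_0=A^{-1}e_n$; the existence and uniqueness of the minimiser itself has moreover already been secured (Propositions~\ref{fconv},~\ref{minconv} and~\ref{methclass}), so the two descriptions agree.
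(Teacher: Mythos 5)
Your proposal is correct and follows essentially the same route as the paper's proof: it characterises the minimiser as the $C$-orthogonal projection of the origin onto $H$, expands the orthogonality conditions $\langle\mathbf{x}_0,\,e_1-e_k\rangle=0$ for $k\in\llbr2,\,n\rrbr$ together with the constraint $\sum_j (x_0)_j=1$ into the system $A\mathbf{x}_0={}^t(0,\dots,0,1)$, and reads off the last column of $A^{-1}$. The one genuine addition is your proof that $A$ is invertible via $\ker A\subseteq H_0\cap H_0^{\perp}=\{0\}$, a point the paper's proof uses implicitly without justification, so your version is slightly more complete.
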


\begin{proof}
For every $\mathbf{x}$ in $H$, $\mathbf{x}$ is the orthogonal projection onto $H$ of the origin point if and only if $\mathbf{x}$ is orthogonal to $H$, that is to say, for every $i\in\llbr2,\,n\rrbr$, $\langle \mathbf{x},\,e_1-e_i\rangle=0$.\\
Since $\displaystyle{\langle \mathbf{x},\,e_1-e_i\rangle=\sum_{j=1}^nx_j\langle e_j,\,e_1-e_i\rangle=\sum_{j=1}^nx_j(c_{1,j}-c_{i,j})}$, we deduce that $\mathbf{x}$ is the solution if and only if $A\mathbf{x}=\begin{bmatrix}
\mathbf{0}_{n-1,1}\\
1
\end{bmatrix}$, i.e. $\mathbf{x}=A^{-1}\begin{bmatrix}
\mathbf{0}_{n-1,1}\\
1
\end{bmatrix}$, which means that the coordinates in $\mathcal{E}$ of $\mathbf{x}$ are given by the last column of $A^{-1}$.
\end{proof}

\section{Distance to $K$ from a point of $\mathbb{R}^n$}\label{section4}

\noindent In this section, we will solve the problem of portfolio optimization without short-selling, by giving an explicit and calculable solution that doesn't seem to appear in the literature.

\subsection{\textbf{Projections onto affine hyperplanes of $\mathbb{R}^m$}}

\noindent Let us now consider $J$ a subset of $\llbr1,\,n\rrbr$ and the vector subspace $\displaystyle{E=\bigoplus_{j\in J}\mathbb{R}e_j}$ of $\mathbb{R}^n$, identified with $\mathbb{R}^m$, where $m=|J|$. Let $H'$ be the affine hyperplane of $E$ defined by the equation $\displaystyle{\sum_{j\in J}x_j=0}$. Let us fix $i_0\in J$ and define $J_1$ by $J_1=J\setminus\{i_0\}$, and let us denote by $\overline{J}$ the complementary of $J$ in $\llbr1,\,n\rrbr$. Then, a basis of the vector hyperplane of $E$ parallel to $H'$ is $\mathcal{B}'=(e_{i_0}-e_i\ /\ i\in J_1)$.\\
Let $\mathbf{a}$ be a point of $\mathbb{R}^n$.\\

\noindent Let us set $$B'=\left(c_{i,j}-c_{i_0,j}\right)_{(i,j)\in J_1\times J}\ \textrm{and}\ B=\begin{bmatrix}
B'\\
\mathbf{1}_{1,m}
\end{bmatrix},$$ then $$\displaystyle{b'=\left(\sum_{j=1}^n a_j(c_{i,j}-c_{i_0,j})\right)_{i\in J_1}}\ \textrm{and}\ b=\begin{bmatrix}
b'\\
1
\end{bmatrix}.$$

\begin{proposition}\label{vp}
The orthogonal projection of $\mathbf{a}$ onto $H'$ is the vector whose nonzero coordinates in $\mathcal{E}$ are given by $B^{-1}b$, which means that $(x_i)_{i\in J}=B^{-1}b$ and $(x_i)_{i\in \overline{J}}=0_{n-m,1}$.
\end{proposition}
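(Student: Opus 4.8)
The plan is to characterize the orthogonal projection by the usual variational condition and then simply read off the announced linear system. Recall that a point $\mathbf{x}\in H'$ is the orthogonal projection of $\mathbf{a}$ onto $H'$ (for the scalar product $\langle\cdot,\cdot\rangle=\,^t\cdot\,C\,\cdot$) if and only if the residual $\mathbf{a}-\mathbf{x}$ is orthogonal to the direction of $H'$, i.e. to every vector of the basis $\mathcal{B}'=(e_{i_0}-e_i\ /\ i\in J_1)$. Since $H'$ is contained in $E=\bigoplus_{j\in J}\mathbb{R}e_j$, any such $\mathbf{x}$ automatically has all its coordinates indexed by $\overline{J}$ equal to zero, which already yields the claim $(x_i)_{i\in\overline{J}}=0_{n-m,1}$; it then remains only to pin down the coordinates $(x_i)_{i\in J}$.

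First I would expand the orthogonality relations using bilinearity and the identity $\langle e_j,e_k\rangle=\,^te_jCe_k=c_{j,k}$, together with the symmetry of $C$. Writing $\mathbf{x}=\sum_{j\in J}x_je_j$ and $\mathbf{a}=\sum_{j=1}^n a_je_j$, for each $i\in J_1$ the condition $\langle \mathbf{a}-\mathbf{x},\,e_{i_0}-e_i\rangle=0$ becomes
$$\sum_{j\in J}x_j\,(c_{i,j}-c_{i_0,j})=\sum_{j=1}^n a_j\,(c_{i,j}-c_{i_0,j}).$$
The crucial bookkeeping point is that the left-hand sum runs only over $j\in J$ (since $\mathbf{x}\in E$), whereas the right-hand sum runs over all $j\in\llbr1,\,n\rrbr$ (since $\mathbf{a}$ is an arbitrary point of $\mathbb{R}^n$); this asymmetry is exactly why $B'=(c_{i,j}-c_{i_0,j})_{(i,j)\in J_1\times J}$ and $b'=\left(\sum_{j=1}^n a_j(c_{i,j}-c_{i_0,j})\right)_{i\in J_1}$ have the shapes given in the statement. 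Thus these $m-1$ equations read precisely $B'(x_i)_{i\in J}=b'$.

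Next I would adjoin the membership constraint: $\mathbf{x}\in H'$ furnishes the remaining scalar equation $\mathbf{1}_{1,m}\,(x_i)_{i\in J}=1$, which is exactly the bottom row of $B$ paired with the last entry of $b$. Stacking it below $B'$ gives $B\,(x_i)_{i\in J}=b$, and hence $(x_i)_{i\in J}=B^{-1}b$, as claimed.

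For this last inversion to be legitimate I would finally justify that $B$ is invertible. This follows because the principal submatrix $C_J=(c_{i,j})_{(i,j)\in J\times J}$ is, like $C$, symmetric definite positive, so $E$ endowed with the restricted scalar product is a genuine Euclidean space; the orthogonal projection of $\mathbf{a}$ onto the affine hyperplane $H'$ of $E$ therefore exists and is unique, which forces the square system $B\mathbf{x}=b$ to have a unique solution and $B$ to be invertible. The computation is otherwise routine; the only points demanding care are the sign conventions and index matching when passing from the orthogonality relations to $B'$ and $b'$, and keeping the two summation ranges distinct — this is where the main (if modest) obstacle lies.
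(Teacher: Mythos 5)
Your proof is correct and takes essentially the same route as the paper's: both characterize the projection variationally (coordinates indexed by $\overline{J}$ vanish, the affine constraint holds, and the residual is orthogonal to each $e_{i_0}-e_i$ for $i\in J_1$), expand via $\langle e_j,e_k\rangle=c_{j,k}$ keeping the two summation ranges distinct, and assemble the square system $B\,(x_i)_{i\in J}=b$. Your closing justification that $B$ is invertible (uniqueness of the projection forces the homogeneous system to be trivial) is a point the paper leaves implicit, and you correctly read the affine constraint as $\sum_{j\in J}x_j=1$, consistent with the last entry of $b$ and with the paper's own proof, despite the paper's definition of $H'$ carrying a typo with right-hand side $0$.
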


\begin{proof}
For every $\displaystyle{\mathbf{x}=\sum_{j=1}^nx_je_j}\in\mathbb{R}^n$, $\mathbf{x}$ is the orthogonal projection of $\mathbf{a}$ onto~$H'$ if and only if the three following conditions hold
\begin{enumerate}
  \item[(i)] for every $j\in\overline{J}$, $x_j=0$,
  \item[(ii)] $\displaystyle{\sum_{j=1}^nx_j=1}$,
  \item[(iii)] for every $i\in J_1$, $\mathbf{x}-\mathbf{a}$ is orthogonal to $e_{i_0}-e_i$.
\end{enumerate}
Since $$\displaystyle{\langle\mathbf{x}-\mathbf{a},\,e_{i_0}-e_j\rangle=\sum_{j=1}^n(x_j-a_j)\langle e_j,\,e_{i_0}-e_i\rangle},$$
we have $\langle\mathbf{x}-\mathbf{a},\,e_{i_0}-e_j\rangle=0$ if and only if
$$\displaystyle{\sum_{j\in J}x_j\langle e_i-e_{i_0},\,e_j\rangle=\sum_{j=1}^na_j\langle e_i-e_{i_0},\,e_j\rangle},$$
i.e. $\displaystyle{\sum_{j\in J}x_j(c_{i,j}-c_{i_0,j})=\sum_{j=1}^na_j(c_{i,j}-c_{i_0,j})}$.
\end{proof}

\subsection{\textbf{The algorithm for computing the generalized distance to $K$ from a point of~$\mathbb{R}^n$}}

\noindent We now propose a recursive algorithm to compute the point $\mathbf{x}_0$ realizing the distance to $K$ from a point $\mathbf{a}\in\mathbb{R}^n$. In his article~\cite{C16}, L.~Condat gave a new and fast algorithm to project a vector onto a simplex. However, his algorithm was made only for the usual Euclidean distance. Our algorithm can be used for every Euclidean distance. The reader can also have a look at the paper~\cite{CY11} about the projection onto a simplex.

\begin{algorithm}\label{alg}$\\$
\verb"Entry: ("$\mathbf{a}$\verb","$K$\verb")"\\
\verb"Compute" $\mathbf{x}$ \verb"the orthogonal projection of" $\mathbf{a}$ \verb"onto" $H$\\
\verb"If" $\mathbf{x}$ \verb"belongs to" $K$\\
\verb"    Return" $\mathbf{x}$\\
\verb"Else"\\
\verb"    If" $K$ \verb"is a" $1-$\verb"simplex (i.e." $K$ \verb"has exactly" $2$ \verb"vertices)"\\
\verb"        Return the vertex that is the closest to" $\mathbf{x}$\\
\verb"    Else"\\
\verb"        Determine the hyperface" $K'$ \verb"of" $K$ \verb"that is the closest to" $\mathbf{x}$\\
\verb"        Compute" $\mathbf{y}$ \verb"the orthogonal projection of" $\mathbf{x}$ \verb"onto" $H'$ \verb"(the affine"\\
\verb"        subspace defined by" $K'$\verb")"\\
\verb"        Apply recursively the algorithm to ("$\mathbf{y}$\verb","$K'$\verb")"\\
\end{algorithm}

\begin{figure}
\includegraphics[height=4cm]{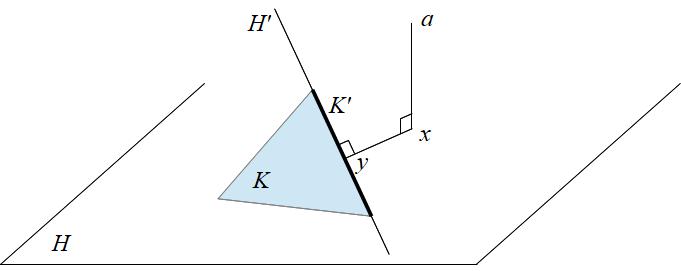}
\caption{Geometric explanation of Algorithm~\ref{alg}}
\label{fig:1}
\end{figure}

\begin{proposition}\label{termin}
Algorithm~\ref{alg} ends.
\end{proposition}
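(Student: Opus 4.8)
The plan is to prove termination by exhibiting a strictly decreasing integer quantity that bounds the depth of recursion. The natural candidate is the dimension of the simplex passed to each recursive call, equivalently the number of its vertices. Let me track this through the algorithm's branches.

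First I would observe that on each invocation the algorithm receives a pair $(\mathbf{a}, K)$ where $K$ is a $(k-1)$-simplex for some $k \geq 2$ (a simplex with exactly $k$ vertices). The algorithm either returns immediately (when the projection onto $H$ lands inside $K$, or when $K$ has exactly two vertices and we return the closest vertex), or it recurses. The key point is that in the recursive branch, the algorithm first determines a hyperface $K'$ of $K$: a hyperface of a $(k-1)$-simplex is a $(k-2)$-simplex, so it has exactly $k-1$ vertices, one fewer than $K$. The recursive call is then applied to $(\mathbf{y}, K')$, so the number of vertices of the simplex argument strictly decreases by exactly one at each level of recursion.

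Then I would formalize this as follows. Let $k_0$ denote the number of vertices of the initial simplex (here $k_0 = n$, since $K$ is the standard $(n-1)$-simplex). Define the recursion depth and argue that at depth $d$ the simplex argument has exactly $k_0 - d$ vertices. Since the recursive branch is only entered when $K$ has at least $3$ vertices (the two-vertex case triggers the base-case return), and since the vertex count is a strictly decreasing sequence of positive integers, the recursion cannot continue indefinitely: after at most $k_0 - 2$ recursive steps the simplex argument is reduced to a $1$-simplex (two vertices), at which point the algorithm necessarily returns the closest vertex and halts. I should also note that the non-recursive work at each call — computing an orthogonal projection via the linear systems of Propositions~\ref{vp} and~\ref{propmatrices}, testing membership in $K$, and selecting the closest hyperface or vertex — consists of finitely many arithmetic operations and comparisons, so each individual call terminates.

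I expect the main obstacle to be purely expository rather than mathematical: one must verify carefully that the ``Else'' branch is genuinely unreachable once $K$ is a $1$-simplex, so that no infinite descent is possible, and that ``hyperface of a $(k-1)$-simplex'' is correctly a $(k-2)$-simplex with one fewer vertex. A subtle point worth addressing is that the geometric step ``determine the hyperface $K'$ closest to $\mathbf{x}$'' always selects a genuine hyperface (a well-defined $(k-2)$-simplex) and that the subsequent projection onto the corresponding affine subspace $H'$ is well-defined — this is guaranteed by Proposition~\ref{vp}, since $B$ is invertible when the covariance form restricted to the relevant coordinate subspace remains definite. Once the strict monotonicity of the vertex count is established, termination is immediate by well-ordering of the positive integers.
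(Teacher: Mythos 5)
Your proposal is correct and follows exactly the paper's argument: the paper's proof is the one-line observation that at each recursive step the dimension of the simplex decreases by one unit, which is precisely the strictly decreasing vertex-count invariant you formalize. Your additional care (verifying the base case is reachable, that each call's local work is finite, and that the projections of Proposition~\ref{vp} are well-defined) is sound elaboration of the same idea rather than a different route.
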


\begin{proof}
This is straightforward since at each step of the algorithm the dimension of the simplex decreases of one unit.
\end{proof}

\begin{lemma}\label{lemme}
If $\mathbf{x}$ belongs to $H\setminus K$, then the distance from $\mathbf{x}$ to $K$ is realized in a point of the frontier of $K$.
\end{lemma}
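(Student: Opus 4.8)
The plan is to exploit the compactness of $K$ to produce a closest point, and then to rule out, by an elementary perturbation argument, the possibility that this point lies in the relative interior of $K$. The first thing I would make precise is the meaning of ``frontier'' here: as a subset of $\mathbb{R}^n$ the simplex $K$ has empty interior, so the frontier in the statement must be understood \emph{relative to the hyperplane $H$} in which $K$ lives. Accordingly, the relative interior of $K$ is $\{\mathbf{x}=(x_1,\dots,x_n)\in K\ /\ x_j>0\ \text{for all}\ j\}$, and the relative frontier is the set of points of $K$ having at least one vanishing coordinate (the union of the hyperfaces of $K$).

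Next I would establish that the distance is attained. Since $K$ is a Hausdorff compact subset of $\mathbb{R}^n$ and the map $\mathbf{z}\mapsto\|\mathbf{x}-\mathbf{z}\|$ is continuous for the norm $\|\cdot\|$ coming from $C$, it reaches its minimum on $K$ at some point $\mathbf{p}\in K$. (Uniqueness of $\mathbf{p}$ follows from the strict convexity already proved in Proposition~\ref{fconv} together with Proposition~\ref{minconv}, but it is not needed for the present statement.)

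The core of the argument is then a proof by contradiction. Suppose $\mathbf{p}$ lies in the relative interior of $K$. Because $\mathbf{x}\notin K$ while $\mathbf{p}\in K$, we have $\mathbf{x}\neq\mathbf{p}$, so the vector $\mathbf{d}=\mathbf{x}-\mathbf{p}$ is nonzero; moreover $\mathbf{d}$ is parallel to $H$, being the difference of two points of $H$, so its coordinates satisfy $\sum_{j=1}^n d_j=0$. Since every coordinate of $\mathbf{p}$ is strictly positive, there exists $\epsilon_0\in(0,1)$ such that $\mathbf{p}+\epsilon\mathbf{d}$ still has all coordinates positive and sum equal to $1$ for every $\epsilon\in[0,\epsilon_0]$; hence $\mathbf{p}+\epsilon\mathbf{d}\in K$ for such $\epsilon$. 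For any $\epsilon\in(0,\epsilon_0]$ we then get
$$\|\mathbf{x}-(\mathbf{p}+\epsilon\mathbf{d})\|=\|(1-\epsilon)\mathbf{d}\|=(1-\epsilon)\|\mathbf{d}\|<\|\mathbf{d}\|=\|\mathbf{x}-\mathbf{p}\|,$$
which contradicts the minimality of $\mathbf{p}$. Therefore $\mathbf{p}$ cannot belong to the relative interior, so it lies in the relative frontier of $K$, as claimed.

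The step I expect to be the only delicate one is the claim that $\mathbf{p}+\epsilon\mathbf{d}\in K$ for small $\epsilon$: this is exactly where both the relative-interior hypothesis ($p_j>0$ for all $j$) and the fact that $\mathbf{d}$ is parallel to $H$ ($\sum_j d_j=0$, so the affine constraint is preserved) are used. Everything else is a direct computation with the norm $\|\cdot\|$ and the compactness of $K$.
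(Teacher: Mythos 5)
Your proof is correct, and at heart it rests on the same idea as the paper's: assume the minimizer $\mathbf{p}$ lies in the relative interior of $K$ and contradict minimality by moving from $\mathbf{p}$ along the segment toward $\mathbf{x}$. The execution differs, though. The paper travels all the way to the point $\mathbf{y}$ where the segment from $\mathbf{x}$ to the supposed interior minimizer crosses a hyperface of $K$, and invokes the equality case of the triangle (Minkowski) inequality, $\|\mathbf{x}-\mathbf{z}\|=\|\mathbf{x}-\mathbf{y}\|+\|\mathbf{y}-\mathbf{z}\|>\|\mathbf{x}-\mathbf{y}\|$; this has the advantage of exhibiting a \emph{frontier} point strictly closer than the interior one, which is exactly what the remark following the lemma uses (the distance to $K$ is the distance to the closest hyperface). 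You instead stay local, perturbing to $\mathbf{p}+\epsilon\mathbf{d}$ with $\mathbf{d}=\mathbf{x}-\mathbf{p}$ and computing $\|(1-\epsilon)\mathbf{d}\|=(1-\epsilon)\|\mathbf{d}\|<\|\mathbf{d}\|$ by homogeneity. Your version is somewhat more self-contained: it avoids the tacit topological step that the segment actually meets the relative frontier at a point \emph{between} $\mathbf{x}$ and $\mathbf{z}$ (needed for the equality case of Minkowski to apply as the paper uses it), and it makes explicit two things the paper leaves implicit --- that ``frontier'' must be read relative to $H$, and why membership in $K$ is preserved under the perturbation (all $p_j>0$ together with $\sum_j d_j=0$, nonnegativity and the affine constraint then forcing each coordinate to stay in $[0,1]$). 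Both arguments use only norm axioms, so both are valid for the generalized distance induced by any symmetric definite positive $C$; the paper's formulation is marginally stronger in that it directly produces a competing frontier point, but your perturbation argument establishes the stated lemma just as rigorously.
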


\begin{proof}
Let us proceed by contradiction by assuming that the distance from $\mathbf{x}$ to $K$ is realized in a point $\mathbf{z}$ in $\stackrel{\circ}{K}$. Let us denote by $\mathbf{y}$ the intersection of the line $(\mathbf{x},\,\mathbf{z})$ with an hyperface of $K$ crossed by this line. Then, by Minkowski, we get $\|\mathbf{x}-\mathbf{z}\|=\|\mathbf{x}-\mathbf{y}\|+\|\mathbf{y}-\mathbf{z}\|>\|\mathbf{x}-\mathbf{y}\|$, which is absurd since $\mathbf{z}$ realizes the minimal distance from $\mathbf{x}$ to $K$.
\end{proof}

\noindent As a consequence, if $\mathbf{x}$ belongs to $H\setminus K$, then the distance from $\mathbf{x}$ to $K$ is the distance from $\mathbf{x}$ to the hyperface of $K$ that is the closest to $\mathbf{x}$.\\

\begin{proposition}\label{corr}
Algorithm~\ref{alg} is correct.
\end{proposition}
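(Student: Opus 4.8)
The plan is to argue by induction on the dimension $d=m-1$ of the simplex $K$, where $m$ is its number of vertices, with the whole argument resting on one elementary orthogonality identity. The key observation is a \emph{projection reduction}: if a convex set $L$ is contained in an affine subspace $F$ and $\mathbf{p}$ denotes the orthogonal projection, for $\langle\cdot,\cdot\rangle$, of a point $\mathbf{q}$ onto $F$, then for every $\mathbf{z}\in L$ the vector $\mathbf{q}-\mathbf{p}$ is orthogonal to the direction of $F$, hence to $\mathbf{p}-\mathbf{z}$, and the Pythagorean identity gives
$$\|\mathbf{q}-\mathbf{z}\|^2=\|\mathbf{q}-\mathbf{p}\|^2+\|\mathbf{p}-\mathbf{z}\|^2.$$
Since the first term is independent of $\mathbf{z}$, minimizing $\|\mathbf{q}-\mathbf{z}\|$ over $L$ is the same as minimizing $\|\mathbf{p}-\mathbf{z}\|$ over $L$; in particular the nearest point of $L$ to $\mathbf{q}$ equals the nearest point of $L$ to $\mathbf{p}$. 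Applied with $L=K$, $F=H$, $\mathbf{q}=\mathbf{a}$, $\mathbf{p}=\mathbf{x}$, this shows that the point of $K$ realizing the distance to $\mathbf{a}$ is exactly the point of $K$ realizing the distance to $\mathbf{x}$, which legitimizes the first line of the algorithm. Uniqueness of this nearest point follows from compactness and convexity of $K$ together with the strict convexity of $\|\cdot\|$, so that ``the'' solution is well defined.

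Next I would follow the branches of the algorithm. If $\mathbf{x}\in K$, then the unconstrained minimizer of $\mathbf{z}\mapsto\|\mathbf{x}-\mathbf{z}\|$ is $\mathbf{x}$ itself and it lies in $K$, so $\mathbf{x}$ is the nearest point of $K$ to $\mathbf{x}$, hence to $\mathbf{a}$, and returning $\mathbf{x}$ is correct. If $\mathbf{x}\notin K$, then $\mathbf{x}\in H\setminus K$ and Lemma~\ref{lemme} together with its stated consequence applies: the distance from $\mathbf{x}$ to $K$ is attained on the boundary $\partial K=\bigcup_i K_i$, where the $K_i$ are the hyperfaces of $K$. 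Since $K_i\subseteq K$ gives $\mathrm{dist}(\mathbf{x},K_i)\ge\mathrm{dist}(\mathbf{x},K)$ while the boundary minimizer lies in some $K_i$, I obtain $\mathrm{dist}(\mathbf{x},K)=\min_i\mathrm{dist}(\mathbf{x},K_i)$, the minimum being reached on the hyperface $K'$ selected by the algorithm; by uniqueness of the projection onto a compact convex set, the nearest point of $K$ to $\mathbf{x}$ then coincides with the nearest point of $K'$ to $\mathbf{x}$.

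It remains to reduce the problem on $K'$ to the recursive call. Applying the projection reduction a second time, with $L=K'$, $F=H'$ the affine hull of $K'$, $\mathbf{q}=\mathbf{x}$ and $\mathbf{p}=\mathbf{y}$, shows that the nearest point of $K'$ to $\mathbf{x}$ equals the nearest point of $K'$ to $\mathbf{y}$, which is the object returned by the recursive call on $(\mathbf{y},K')$. Chaining the equalities (nearest point of $K$ to $\mathbf{a}$ $=$ nearest point of $K$ to $\mathbf{x}$ $=$ nearest point of $K'$ to $\mathbf{x}$ $=$ nearest point of $K'$ to $\mathbf{y}$), the induction hypothesis, that the algorithm is correct on the simplex $K'$ of dimension $d-1$, yields correctness on $K$. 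The base case $d=1$ is the $1$-simplex branch: $K=[\mathbf{v}_1,\mathbf{v}_2]$ is a segment, its affine hull is a line, and when the projection $\mathbf{x}$ falls outside the segment the nearest point is the endpoint of $[\mathbf{v}_1,\mathbf{v}_2]$ closest to $\mathbf{x}$, exactly what the algorithm returns; termination of the recursion is guaranteed by Proposition~\ref{termin}.

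The step I expect to be the main obstacle is the middle one: turning Lemma~\ref{lemme} into the precise statement that the global minimizer over $K$ sits on the single closest hyperface $K'$ and equals the minimizer over $K'$. One must be careful that this minimizer of $K'$ may fall on a lower-dimensional sub-face shared with a neighbouring hyperface; this causes no trouble for the recursion, but it means the identification $\mathrm{dist}(\mathbf{x},K)=\min_i\mathrm{dist}(\mathbf{x},K_i)$ and the handling of possible ties among the hyperfaces should be written out explicitly, the uniqueness of the projection onto a compact convex set being used to guarantee that whichever closest hyperface is chosen, the same point is produced.
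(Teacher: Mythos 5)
Your proof is correct and follows essentially the same route as the paper's: induction on the dimension of the simplex, Lemma~\ref{lemme} to reduce the problem to the closest hyperface $K'$, and Pythagorean orthogonality to pass from $\mathbf{a}$ to its projection $\mathbf{x}$ on $H$ and from $\mathbf{x}$ to its projection $\mathbf{y}$ on $H'$. Your reformulation in terms of nearest points (rather than the paper's chained distance identities) and your explicit handling of ties among closest hyperfaces via uniqueness of the projection onto a compact convex set are minor refinements of the same argument, not a different method.
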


\begin{proof}
Let us prove by induction on the dimension of $K$ that the algorithm provides us $\mathbf{x}_0\in K$ such that $d(\mathbf{a},\,K)=\|\overrightarrow{\mathbf{a}\mathbf{x}_0}\|$.\\
$\bullet$ If $K$ has dimension $1$, the result is clear.\\
$\bullet$ Now assume that the algorithm is correct for every $(n-1)-$simplex. Let us consider $K$ a $n-$simplex (with $n\geq2$), and prove that the algorithm is correct for $K$. Let $\mathbf{x}$ be the orthogonal projection of $\mathbf{a}$ onto $H$.\\
--- If $\mathbf{x}$ belongs to $K$, then $\mathbf{x}$ is the solution, and the algorithm is correct.\\
--- If $\mathbf{x}$ does not belong to $K$, as $n\geq 2$, we consider the simplex $K'$ defined above, the affine subspace $H'$ and $\mathbf{y}$ the orthogonal projection of $\mathbf{x}$ onto $H'$. By induction hypothesis applied to $\mathbf{y}$ and the $(n-1)-$simplex $K'$, the algorithm provides us $\mathbf{x}_0\in K'$ such that $d(\mathbf{y},\,K')=\|\overrightarrow{\mathbf{y}\mathbf{x}_0}\|$. In particular, $\mathbf{x}_0$ belongs to $K$. Let us now prove that $d(\mathbf{a},\,K)=\|\overrightarrow{\mathbf{a}\mathbf{x}_0}\|$. According to the Pythagorean theorem, as $\overrightarrow{\mathbf{a}\mathbf{x}}$ is orthogonal to $H$, we have
$$d(\mathbf{a},\,K)^2=\|\overrightarrow{\mathbf{a}\mathbf{x}}\|^2+d(\mathbf{x},\,K)^2=\|\overrightarrow{\mathbf{a}\mathbf{x}}\|^2+d(\mathbf{x},\,K')^2,$$
thanks to Lemma~\ref{lemme}.\\
Moreover, as $\overrightarrow{\mathbf{x}\mathbf{y}}$ is orthogonal to $H'$, we have
$$d(\mathbf{x},\,K')^2=\|\overrightarrow{\mathbf{x}\mathbf{y}}\|^2+d(\mathbf{y},\,K')^2=\|\overrightarrow{\mathbf{x}\mathbf{y}}\|^2+\|\overrightarrow{\mathbf{x}_0\mathbf{y}}\|^2=\|\overrightarrow{\mathbf{x}\mathbf{x}_0}\|^2$$
since $\overrightarrow{\mathbf{x}\mathbf{y}}$ is orthogonal to $\overrightarrow{\mathbf{x}_0\mathbf{y}}$.\\
Finally, $d(\mathbf{a},\,K)^2=\|\overrightarrow{\mathbf{a}\mathbf{x}}\|^2+\|\overrightarrow{\mathbf{x}\mathbf{x}_0}\|^2=\|\overrightarrow{\mathbf{a}\mathbf{x}_0}\|^2$ as $\overrightarrow{\mathbf{a}\mathbf{x}}$ is orthogonal to $\overrightarrow{\mathbf{x}\mathbf{x}_0}$, hence $d(\mathbf{a},\,K)=\|\overrightarrow{\mathbf{a}\mathbf{x}_0}\|$ and the algorithm is correct for $K$.
\end{proof}

\begin{remark}\label{Rk}
Let $\mathbf{x}$ be in $H\setminus K$. Then the hyperface of $K$ that is the closest to $\mathbf{x}$ is not necessarily the hyperface of $K$ obtained by suppressing the (or one) vertex of $K$ that is the furthest of $\mathbf{x}$.
\end{remark}

\begin{proof}
Let us consider the following example: let $K$ be the $2-$simplex in the hyperplane $\{x+y+z=1\}$ of $\mathbb{R}^3$. Let us set
$$C=\begin{bmatrix}
0.012 & 0.004 & 0.008\\
0.004 & 0.011 & 0.007\\
0.008 & 0.007 & 0.011
\end{bmatrix},\ \
\mathbf{x}=\begin{bmatrix}
0.470\\
0.534\\
-0.004
\end{bmatrix}.$$
Then the distances from $\mathbf{x}$ to the vertices $e_1$, $e_2$, $e_3$ are respectively $d_1\simeq0.065$, $d_2\simeq0.057$, $d_3\simeq0.062$.\\
Now, the projections of $\mathbf{x}$ onto the edges defined by $\{e_2,\,e_3\}$, $\{e_1,\,e_3\}$, $\{e_1,\,e_2\}$ are respectively
$$p_1=\begin{bmatrix}
0.\\
0.534\\
0.466
\end{bmatrix},\ \
p_2=\begin{bmatrix}
0.47\\
0.\\
0.53
\end{bmatrix},\ \
p_3=\begin{bmatrix}
0.46786667\\
0.53213333\\
0.
\end{bmatrix},$$
and the distances from $\mathbf{x}$ to these edges are $\delta_1\simeq0.039$, $\delta_2\simeq0.048$, $\delta_3\simeq0.0002$.\\
So we have $d_2<d_3<d_1$ but $\delta_3<\delta_1<\delta_2$. Here, the vertex of $K$ that is the furthest of $\mathbf{x}$ is $e_1$, but the distance from $\mathbf{x}$ to $K$ is realized in a point of the edge defined by $\{e_1,\,e_2\}$.
\end{proof}

\subsection{\textbf{Minimisation of $f$ on $K$}}

\noindent Now that we have the algorithm to compute the generalized distance to a standard simplex, it is easy to find the solution that minimises $f$ on $K$: the portfolio that possesses the lowest risk is $P_{\mathbf{x}_0}$, where $\mathbf{x}_0$ is the point of $K$ that realizes the distance from the origin point to $K$, i.e. $d(0,\,K)=\|\mathbf{x}_0\|$.

\section{Application to portfolio optimization}\label{section5}

\noindent Here we determine the portfolio with lowest risk: we determine the convex combination of CAC 40 stocks\footnote{We use the following abbreviations.\\
\begin{tabular}{|l|l|}
  \hline
AC : Accor SA & ACA : Credit Agricole S.A.\\
AI : L'Air Liquide S.A. & AIR : Airbus SE\\
ATO : Atos SE & BN : Danone S.A.\\
BNP : BNP Paribas SA & CA : Carrefour SA\\
CAP : Capgemini SE & CS : AXA SA\\
DG : VINCI SA & DSY : Dassault Systemes SE\\
EL : EssilorLuxottica Societe anonyme & EN : Bouygues SA\\
ENGI : ENGIE SA & FP : TOTAL S.A.\\
GLE : Societe Generale Societe anonyme & HO : Thales S.A.\\
KER : Kering SA & LR : Legrand SA\\
MC : LVMH Moet Hennessy - Louis Vuitton & ML : Cie G\textsuperscript{le} des Et. Michelin\\
MT : ArcelorMittal & OR : L'Oreal S.A.\\
ORA : Orange S.A. & PUB : Publicis Groupe S.A.\\
RI : Pernod Ricard SA & RMS : Hermes International\\
RNO : Renault SA & SAF : Safran SA\\
SAN : Sanofi & SGO : Compagnie de Saint-Gobain S.A.\\
STM : STMicroelectronics N.V. & SU : Schneider Electric\\
SW : Sodexo S.A. & UG : Peugeot S.A.\\
URW : Unibail-Rodamco-Westfield & VIE : Veolia Environnement S.A.\\
VIV : Vivendi & WLN : Worldline\\
  \hline
\end{tabular}} for which the variance is minimal\footnote{For this computation, we do not consider EL, GLE and WLN, for which we don't have enough data.}.\\
We use the mean and the standard deviation of monthly\footnote{The French stock market month (that ends the third Friday in the month) is used.} variation.

\subsection{\textbf{Portfolio optimization from 2007-04-23 to 2020-07-21}}

\noindent Here we consider the period from 2007-04-23 to 2020-07-21, that is to say we start from the highest point of CAC 40 index. Table~\ref{tab:1} gives the mean and the standard deviation of stocks' return rates that appear in the results.

\begin{table}[h]
\caption{Mean and standard deviation of stocks' return rate}
\label{tab:1}
\begin{tabular}{||l||c|c|c|c|c||}
\hline\noalign{\smallskip}
\textrm{Stock}  & \textrm{AI} & \textrm{BN} & \textrm{CA} & \textrm{DSY} & \textrm{ENGI} \\
\noalign{\smallskip}\hline\noalign{\smallskip}
\textrm{Mean}  & 0.73\% & 0.18\% & -0.559\% & 1.53\% & -0.43\% \\
\textrm{Std. Dev.} & 5.73\% & 5.55\% & 8.02\% & 6.58\% & 7.06\% \\
\hline\hline\noalign{\smallskip}
\textrm{Stock}  & \textrm{HO} & \textrm{ORA} & \textrm{RMS} & \textrm{SAN} & \textrm{VIV} \\
\noalign{\smallskip}\hline\noalign{\smallskip}
\textrm{Mean}  & 0.53\% & -0.18\% & 1.6\% & 0.4\% & 0.08\% \\
\textrm{Std. Dev.} & 6.57\% & 6.74\% & 8.63\% & 6.25\% & 6.75\% \\
\noalign{\smallskip}\hline
\end{tabular}
\end{table}

\noindent By using Algorithm~\ref{alg}, we determine the portfolio with allowed short-selling that possesses the lowest risk: this linear combination is given by Table~\ref{tab:2}. The mean of its monthly variation is $0.44\%$ and its standard-deviation $4.35\%$.

\begin{table}[h]
\caption{Portfolio with allowed short-selling that possesses the lowest risk}
\label{tab:2}
\begin{tabular}{||l||c|c|c|c|c||}
\hline\noalign{\smallskip}
\textrm{Stock}  & \textrm{AI} & \textrm{BN} & \textrm{CA} & \textrm{DSY} & \textrm{ENGI} \\
$\mathbf{x}_0$  & 7.28\% & 22.69\% & 3.90\% & 12.63\% & 2.41\% \\
\noalign{\smallskip}\hline\noalign{\smallskip}
\textrm{Stock}  & \textrm{HO} & \textrm{ORA} & \textrm{RMS} & \textrm{SAN} & \textrm{VIV} \\
$\mathbf{x}_0$  & 3.81\% & 21.05\% & 8.96\% & 16.57\% & 0.70\% \\
\noalign{\smallskip}\hline
\end{tabular}
\end{table}

\noindent We observe here that the linear combination obtained is a convex combination, which means that this portfolio is also the portfolio without short-selling that possesses the lowest risk. In geometrical terms, this means that the orthogonal projection of the origin point onto the hyperplane $H$ already belongs to the simplex $K$.

\begin{table}[h]
\caption{Yearly return rate of the portfolio}
\label{tab:3}
\begin{small}
\begin{tabular}{||l||c|c|c|c|c|c|c||}
\hline\noalign{\smallskip}
\textrm{Year}  & 2007 & 2008 & 2008 & 2010 & 2011 & 2012 & 2013 \\
\noalign{\smallskip}\hline\noalign{\smallskip}
\textrm{Porfolio's\ r.\ r.}  & 1.85\% & -23.04\% & 5.09\% & 12.84\% & -0.80\% & 14.67\% & 7.81\% \\
\textrm{CAC\ 40's\ r.\ r.}  & -5.12\% & -43.87\% & 22.87\% & 0.34\% & -23.45\% & 23.54\% & 14.72\% \\
\hline\hline\noalign{\smallskip}
\textrm{Year}  & 2014 & 2015 & 2016 & 2017 & 2018 & 2019 & 2020 \\
\noalign{\smallskip}\hline\noalign{\smallskip}
\textrm{Porfolio's\ r.\ r.}  & 12.89\% & 16.16\% & 3.95\% & 14.38\% & 1.47\% & 33.12\% & 3.37\% \\
\textrm{CAC\ 40's\ r.\ r.}  & 0.93\% & 7.30\% & 5.64\% & 12.40\% & -14.65\% & 30.33\% & -15.34\% \\
\noalign{\smallskip}\hline
\end{tabular}
\end{small}
\end{table}

\noindent Table~\ref{tab:3} and Figure~\ref{fig:2} give the yearly return rate of this portfolio. We notice that the portfolio is more profitable and more regular than the index: indeed, its mean is $7.41\%$ and its standard deviation $11.93\%$, whereas for the index, the mean is $1.12\%$ and the standard deviation $19.61\%$. Moreover, the return rate of the portfolio is almost never negative.

\begin{figure}[h]
\includegraphics[height=6cm]{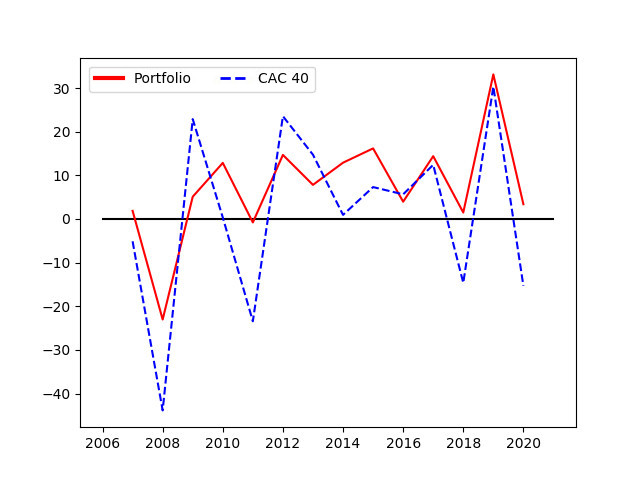}
\caption{Yearly return rate of the portfolio}
\label{fig:2}
\end{figure}

\noindent Finally, Table~\ref{tab:4} and Figure~\ref{fig:3} give the final value of the portfolio compared with the CAC 40 index and show the monthly variations of their value.

\begin{figure}[h]
\includegraphics[height=6cm]{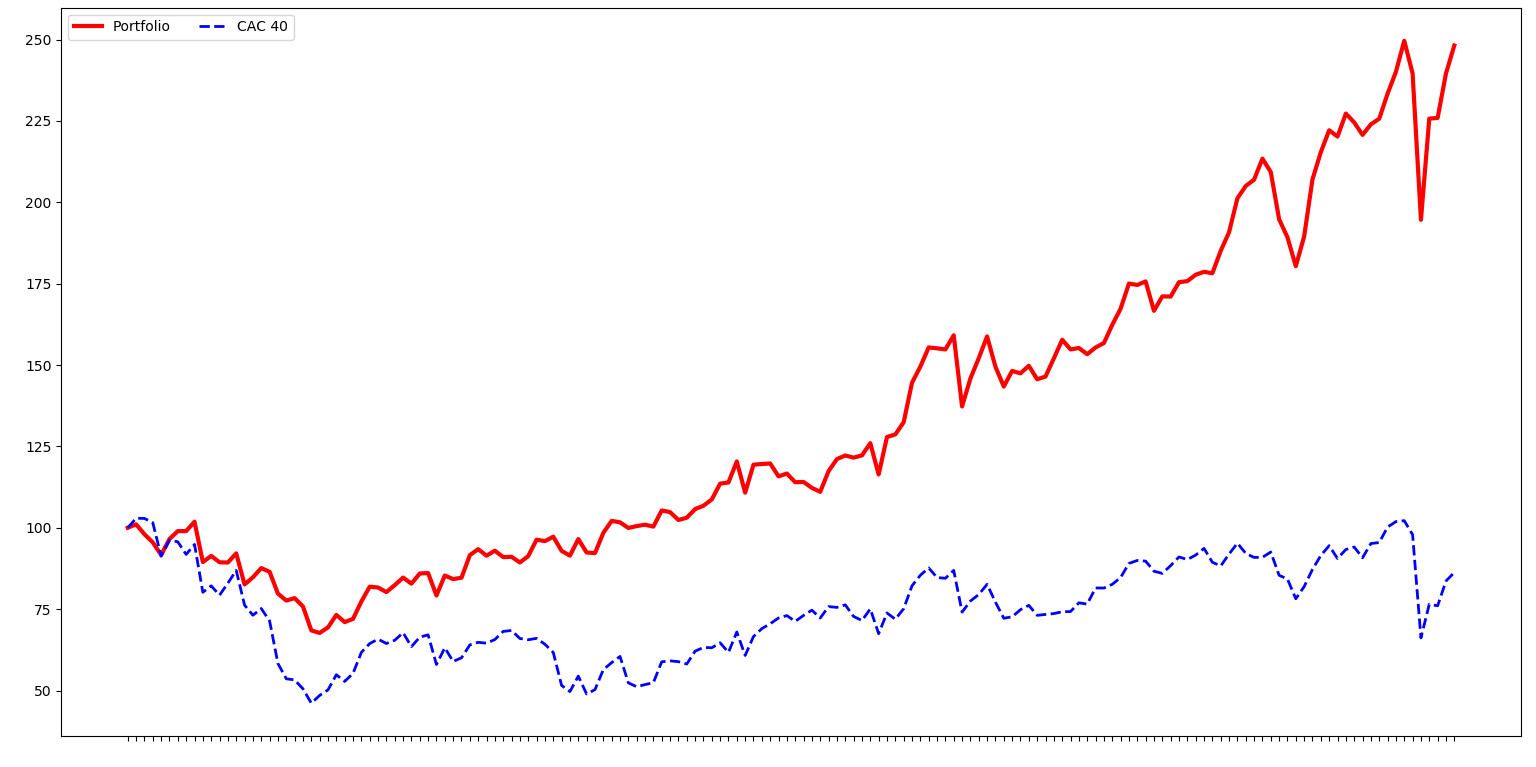}
\caption{Portfolio optimization from 2007-04-23 to 2020-07-21}
\label{fig:3}
\end{figure}

\begin{table}[h]
\caption{Final value of the portfolio compared with the CAC 40 index}
\label{tab:4}
\begin{tabular}{||l||c|c||}
\hline\noalign{\smallskip}
\textrm{Date}  & \textrm{2009-01-19} & \textrm{2020-07-21}\\
\noalign{\smallskip}\hline\noalign{\smallskip}
\textrm{Portfolio}  & 100 & \rou{248.16} \\
\textrm{CAC\ 40}  & 100 & \rou{86.26} \\
\noalign{\smallskip}\hline
\end{tabular}
\end{table}

\subsection{\textbf{Portfolio optimization from 2009-01-19 to 2020-07-21}}

\noindent Here we consider the period from 2009-01-19 to 2020-07-21, that is to say we start from the lowest point of CAC 40 index. As in previous section, Table~\ref{tab:5} gives the mean and the standard deviation of stocks' return rates that appear in the results.\\

\begin{table}[h]
\caption{Mean and standard deviation of stocks' return rate}
\label{tab:5}
\begin{tabular}{||l||c|c|c|c|c|c||}
\hline\noalign{\smallskip}
\textrm{Stock}  & \textrm{AI} & \textrm{BN} & \textrm{CA} & \textrm{DSY} & \textrm{ENGI} & \textrm{HO} \\
\noalign{\smallskip}\hline\noalign{\smallskip}
\textrm{Mean}  & 1.03\% & 0.44\% & -0.13\% & 1.93\% & -0.470\% & 0.79\% \\
\textrm{Std. Dev.} & 5.36\% & 5.34\% & 7.9699\% & 6.0\% & 7.0900\% & 6.4799\%\\
\hline\hline\noalign{\smallskip}
\textrm{Stock}  & \textrm{OR} & \textrm{ORA} & \textrm{RI} & \textrm{RMS} & \textrm{SAN} & \textrm{VIV} \\
\noalign{\smallskip}\hline\noalign{\smallskip}
\textrm{Mean}  & 1.3599\% & -0.18\% & 1.0\% & 1.8599\% & 0.64\% & 0.37\% \\
\textrm{Std. Dev.} & 5.42\% & 6.77\% & 5.93\% & 7.4399\% & 6.03\% & 6.77\% \\
\noalign{\smallskip}\hline
\end{tabular}
\end{table}

\noindent The portfolio with allowed short-selling that possesses the lowest risk is given by the following linear combination in Table~\ref{tab:6}. The mean of its monthly variation is $0.76\%$ and its standard-deviation $4.17\%$.\\

\begin{table}[h]
\caption{Portfolio with allowed short-selling that possesses the lowest risk}
\label{tab:6}
\begin{tabular}{||l||c|c|c|c|c|c||}
\hline\noalign{\smallskip}
\textrm{Stock}  & \textrm{AI} & \textrm{BN} & \textrm{CA} & \textrm{DSY} & \textrm{ENGI} & \textrm{HO} \\
$\mathbf{x}_0$  & 10.70\% & 28.25\% & 5.42\% & 20.91\% & -2.34\% & 1.88\% \\
\noalign{\smallskip}\hline\noalign{\smallskip}
\textrm{Stock}  & \textrm{OR} & \textrm{ORA} & \textrm{RI} & \textrm{RMS} & \textrm{SAN} & \textrm{VIV} \\
$\mathbf{x}_0$  & -20.66\% & 19.75\% & -3.07\% & 17.26\% & 18.36\% & 3.54\% \\
\noalign{\smallskip}\hline
\end{tabular}
\end{table}

\noindent Now, according to Algorithm~\ref{alg}, the portfolio without short-selling that possesses the lowest risk is given by the following convex combination in Table~\ref{tab:7}. The mean of its monthly variation is $0.87\%$ and its standard-deviation $4.22\%$.\\

\begin{table}[h]
\caption{Portfolio without allowed short-selling that possesses the lowest risk}
\label{tab:7}
\begin{tabular}{||l||c|c|c|c|c||}
\hline\noalign{\smallskip}
\textrm{Stock}  & \textrm{AI} & \textrm{BN} & \textrm{CA} & \textrm{DSY}  & \textrm{HO}   \\
$\mathbf{x}_0$  & 6.17\% & 19.22\% & 3.43\% & 17.37\% & 2.96\% \\
\noalign{\smallskip}\hline\noalign{\smallskip}
\textrm{Stock}  & \textrm{ORA} & \textrm{RMS} & \textrm{SAN} & \textrm{VIV} \\
$\mathbf{x}_0$  & 17.15\% & 15.60\% & 16.73\% & 1.37\% \\
\noalign{\smallskip}\hline
\end{tabular}
\end{table}

\noindent Let us note that some stocks that were in the linear combination of the portfolio with allowed short-selling now disappear: the geometric explanation of this fact immediately comes from Lemma~\ref{lemme}.\\

\noindent The yearly return rate is given by Table~\ref{tab:8}. Here again, as shown by Table~\ref{tab:8} and Figure~\ref{fig:4}, the portfolio is more profitable and more regular than the index: its mean is $13.27\%$ and its standard deviation $11.99\%$, whereas for the index, the mean is $5.94\%$ and the standard deviation $16.78\%$. Moreover, the return rate of the portfolio is negative for only two years, and the absolute value of these negative return rates very small.\\

\begin{table}[h]
\caption{Yearly return rate of the portfolio}
\label{tab:8}
\begin{tabular}{||l||c|c|c|c|c|c||}
\hline\noalign{\smallskip}
\textrm{Year}  & 2009 & 2010 & 2011 & 2012 & 2013 & 2014 \\
\noalign{\smallskip}\hline\noalign{\smallskip}
\textrm{Portfolio's\ r.\ r.}  & 17.95\% & 15.82\% & -1.49\% & 27.80\% & 5.76\% & 7.87\% \\
\textrm{CAC\ 40's\ r.\ r.}  & 29.51\% & 0.34\% & -23.45\% & 23.54\% & 14.72\% & 0.93\% \\
\hline\hline\noalign{\smallskip}
\textrm{Year}  & 2015 & 2016 & 2017 & 2018 & 2019 & 2020 \\
\noalign{\smallskip}\hline\noalign{\smallskip}
\textrm{Portfolio's\ r.\ r.}  & 23.81\% & -2.25\% & 18.18\% & 2.86\% & 38.27\% & 4.72\% \\
\textrm{CAC\ 40's\ r.\ r.}  & 7.30\% & 5.64\% & 12.40\% & -14.65\% & 30.33\% & -15.34\% \\
\noalign{\smallskip}\hline
\end{tabular}
\end{table}

\begin{figure}[h]
\includegraphics[height=6cm]{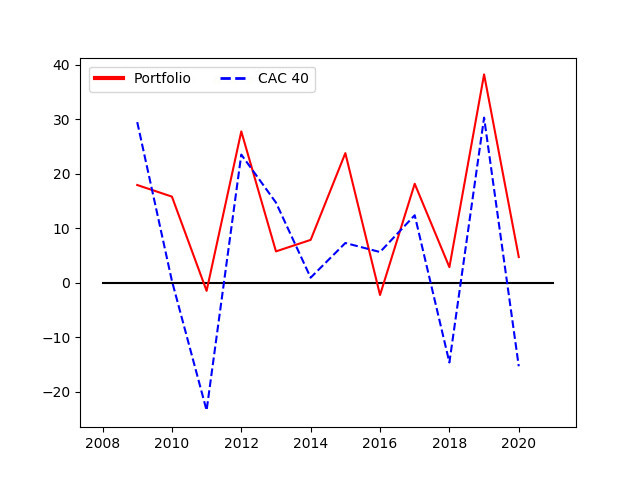}
\caption{Yearly return rate of the portfolio}
\label{fig:4}
\end{figure}

\noindent Finally, Table~\ref{tab:9} and Figure~\ref{fig:5} above give the final value of the portfolio compared with the CAC 40 index and show the monthly variations of their value.

\begin{table}
\caption{Final value of the portfolio compared with the CAC 40 index}
\label{tab:9}
\begin{tabular}{||l||c|c||}
\hline\noalign{\smallskip}
\textrm{Date}  & \textrm{2009-01-19} & \textrm{2020-07-21}\\
\noalign{\smallskip}\hline\noalign{\smallskip}
\textrm{Portfolio}  & 100 & \rou{417.97} \\
\textrm{CAC\ 40}  & 100 & \rou{170.73} \\
\noalign{\smallskip}\hline
\end{tabular}
\end{table}

\begin{figure}
\includegraphics[height=6cm]{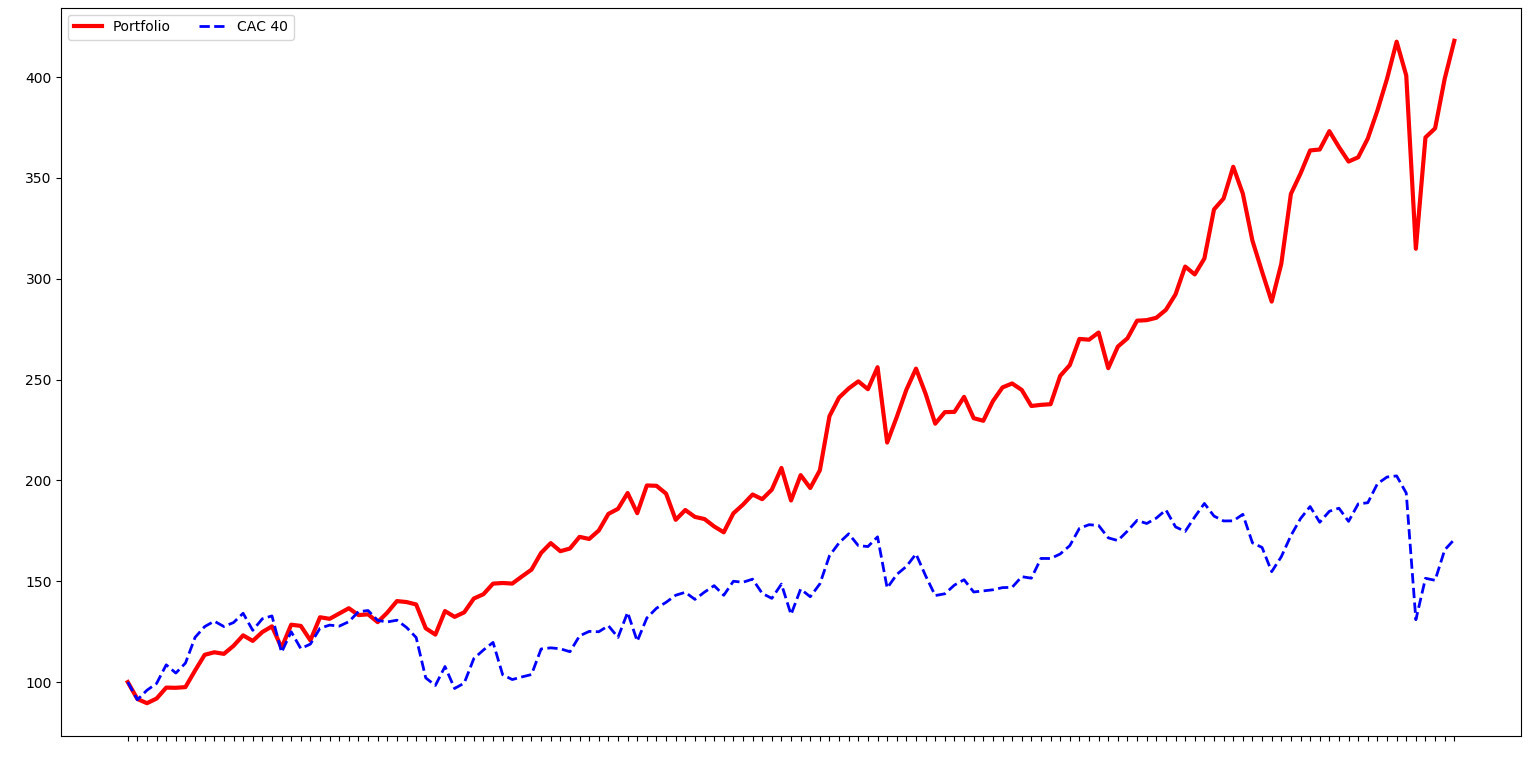}
\caption{Portfolio optimization from 2009-01-19 to 2020-07-21}
\label{fig:5}
\end{figure}

\section*{Appendix --- Python programs}

\noindent Here we give a possible way to program Algorithm~\ref{alg} in Python as well as the subroutine used to compute an orthogonal projection.\\

\noindent The function \verb"orth_proj(c,a,J)" computes an orthogonal projection, where\\
--- \verb"c" is the covariance matrix,\\
--- \verb"a" is the point of which we want to compute the orthogonal projection,\\
--- \verb"J" is the list of indices of $p$ vectors of $\mathcal{E}$ that define the affine subspace onto which we want to project \verb"a".\\
\verb"def orth_proj(c,a,J):"\\
\verb"    p=len(J); n=len(c); i0=J[0]; L=list(set(range(n))-set(J))"\\
\verb"    #L is the complementary of J"\\
\verb"    #Matrix of the system"\\
\verb"    Mpart1=np.array([[c[i,j]-c[i0,j] for j in J] for i in J[1:]])"\\
\verb"    Mpart2=np.ones((1,p))"\\
\verb"    M=np.concatenate((Mpart1,Mpart2),axis=0)"\\
\verb"    #Second part of the system"\\
\verb"    b=np.array([sum(a[j]*(c[i,j]-c[i0,j]) for j in range(n))"\\
\verb"        for i in J[1:]]+[1])"\\
\verb"    #Solving"\\
\verb"    sol=np.linalg.solve(M,b)"\\
\verb"    x=np.zeros(n); x[J]=sol"\\
\verb"    return x"\\

\noindent The function \verb"mini_dist_fct(c,a)" finds the point that realizes the minimal distance from $\mathbf{a}$ to the standard $(n-1)-$simplex and also returns the square of this distance: it is a possible version of Algorithm~\ref{alg}.\\
\verb"def mini_dist_fct(c,a):"\\
\verb"    #Scalar product"\\
\verb"    def phi(x,y):"\\
\verb"        return np.dot(x,np.dot(c,y))"\\
\verb"    #Canonical basis"\\
\verb"    n=len(c)"\\
\verb"    e=[np.array(j*[0]+[1]+(n-j-1)*[0]) for j in range(n)]"\\
\verb"    dico={}"\\
\verb"    #Recursive function mini_dist(c,a,J)"\\
\verb"    def mini_dist(c,a,J):"\\
\verb"        #Orthogonal projection of a"\\
\verb"        x=orth_proj(c,a,J)"\\
\verb"        #Case 1: the orthogonal projection belongs to the simplex"\\
\verb"        if all(t>=0 for t in x):"\\
\verb"            return [x,phi(x-a,x-a)]"\\
\verb"        #Case 2: the orthogonal projection does not belong to the"\\
\verb"        #simplex and the simplex has dimension 1"\\
\verb"        elif len(J)==2:"\\
\verb"            d0=phi(x-e[J[0]],x-e[J[0]])"\\
\verb"            d1=phi(x-e[J[1]],x-e[J[1]])"\\
\verb"            if d0<=d1:"\\
\verb"                return [e[J[0]],phi(e[J[0]]-a,e[J[0]]-a)]"\\
\verb"            else:"\\
\verb"                return [e[J[1]],phi(e[J[1]]-a,e[J[1]]-a)]"\\
\verb"        #Case 3: the orthogonal projection does not belong to the"\\
\verb"        #simplex and the simplex has dimension greater than 1"\\
\verb"        else:"\\
\verb"            #Looking for the hyperface that is the closest to x"\\
\verb"            s=J[0]"\\
\verb"            if str(set(J)-{s})+str(x) in dico:"\\
\verb"                delta=dico[str(set(J)-{s})+str(x)]"\\
\verb"            else:"\\
\verb"                delta=mini_dist(c,x,list(set(J)-{s}))"\\
\verb"                dico[str(set(J)-{s})+str(x)]=delta"\\
\verb"            d=delta[1]"\\
\verb"            for j in J[1:]:"\\
\verb"                if str(set(J)-{j})+str(x) in dico:"\\
\verb"                    delta0=dico[str(set(J)-{j})+str(x)]"\\
\verb"                else:"\\
\verb"                    delta0=mini_dist(c,x,list(set(J)-{j}))"\\
\verb"                    dico[str(set(J)-{j})+str(x)]=delta0"\\
\verb"                d0=delta0[1]"\\
\verb"                if d0<d:"\\
\verb"                    s=j; d=d0"\\
\verb"            #Projection onto the simplex defined by the"\\
\verb"            #closest hyperface"\\
\verb"            J=list(set(J)-{s})"\\
\verb"            if str(set(J))+str(x) in dico:"\\
\verb"                delta=dico[str(set(J))+str(x)]"\\
\verb"            else:"\\
\verb"                delta=mini_dist(c,x,J)"\\
\verb"                dico[str(set(J))+str(x)]=delta"\\
\verb"            x=delta[0]"\\
\verb"            return [x,phi(x-a,x-a)]"\\
\verb"    #Computing the point realizing the minimal distance"\\
\verb"    return mini_dist(c,a,list(range(n)))"

\end{document}